\documentclass[journal]{IEEEtran}
\usepackage{mathrsfs}
\usepackage{mathptmx}
\usepackage{amsmath}
\usepackage{amsfonts}
\usepackage{amssymb}
\usepackage{amsthm}
\usepackage{array}
\usepackage{subfigure}
\usepackage{textcomp}
\usepackage{float}
\usepackage{graphicx}
\usepackage{cite}
\usepackage{bm}
\usepackage{arydshln}
\usepackage{xcolor}
\usepackage{psfrag}
\usepackage{multirow}
\usepackage{booktabs}
\usepackage{chngcntr}
\usepackage[backref]{hyperref}
\usepackage[OT2,T1]{fontenc}


\newtheorem*{problem}{Problem Statement}

\newtheorem{defi}{Definition}
\newtheorem{thm}{Theorem}

\newtheorem{rem}{Remark}

\newtheorem{lem}{Lemma}

\hyphenation{op-tical net-works semi-conduc-tor IEEE-Xplore}

\begin{document}

\title{Learning and Control from Similarity Between Heterogeneous Systems: A Behavioral Approach}
	
\author{Chenchao Wang and Deyuan Meng, \emph{Senior Member, IEEE}
\thanks{Chenchao Wang is with the School of Automation Science and Electrical Engineering, Beihang University (BUAA), Beijing 100191, P. R. China, and also with the Seventh Research Division, Beihang University (BUAA), Beijing 100191, P. R. China (e-mail: chenchaow1999@163.com).}
\thanks{Deyuan Meng is with the School of Automation Science and Electrical Engineering, Beihang University (BUAA), Beijing 100191, P. R. China, with the State Key Laboratory of CNS/ATM, Beijing 100191, P. R. China, and also with the Seventh Research Division, Beihang University (BUAA), Beijing 100191, P. R. China (e-mail: dymeng@buaa.edu.cn).}
	}

\maketitle
\begin{abstract}
This paper proposes basic definitions of similarity and similarity indexes between heterogeneous linear systems and presents a similarity-based learning control strategy. By exploring geometric properties of admissible behaviors of linear systems, the similarity indexes between two admissible behaviors of heterogeneous systems are defined as the principal angles between their subspace components, and an efficient strategy for calculating the similarity indexes is developed. By leveraging the similarity indexes, a similarity-based learning control strategy is proposed via projection techniques. With the application of the similarity-based learning control strategy, host system can efficiently accomplish the same tasks by leveraging the successful experience of guest system, without the necessity to repeat the trial-and-error process experienced by the guest system.
\end{abstract}
	
\begin{IEEEkeywords}
Similarity index; similarity-based learning; admissible behavior; experience; projection technique.
\end{IEEEkeywords}
	
\section{Introduction}\label{sec:introduction}
	
Learning-based control has received considerable attention and emerged as one of the most promising methodologies in the field of intelligent control in the past decades. Inspired by the human learning process, learning-based control values the experience generated during the control process and employs it to improve future control performance. Learning-based control possesses a rich history, resulting in the developments of various influential intelligent control frameworks \cite{ZJiang2020FTSC}. For instance, iterative learning control (ILC) that learns from past iterations was first proposed in 1980s to better manipulate mechanical robots \cite{SArimoto1984JRS}; Reinforcement learning (RL) attempts to obtain the optimal control policy by leveraging the experience from trial-and-error interactions in a dynamic environment \cite{KArulkumaran2017IEEESPM}; Neural-network-based control (NNBC) repetitively selects a mini-batch from training data and minimizes the corresponding cost function to drive the estimated weighting parameters of approximated models approaching optimal values \cite{SAlbawi2017ICET}. All of these control strategies learn from the past experience and modify the controllers/policies/parameters, and further achieve desired control performances and possess lower dependency on the accurate model knowledge. Thanks to its advantages of low model-dependency and high control accuracy, learning-based control has been widely applied in numerous industrial fields such as robots \cite{SSchaal2010IEEERAM}, autonomous vehicles \cite{SKuutti2020TITS}, intelligent transportation systems \cite{HDong2010IEEECSM} and so on.
	
	Nevertheless, those aforementioned learning-based control strategies only focus on how individual systems leverage their own past experience. If we scrutinize human learning behavior, it is evident that besides personal past experience, external successful experience, especially that from individuals similar to oneself, also plays an essential role. An illustrative example is that learning skills from a highly skilled teacher can often be more effective than repetitive trial-and-error. A straightforward inspiration is whether a host system can also benefit from the successful experience from other guest systems to enhance its own control performances. Unfortunately, to our knowledge, few control strategies concentrate on leveraging the successful experience of other systems \cite{AJVan2004TAC}. Furthermore, due to the model differences among heterogeneous systems, it is necessary to determine which experience is more valuable when receiving external successful experience from multiple guest systems.
	
	Building upon the aforementioned discussions, this paper proposes the definitions of similarity and similarity indexes, and presents an innovative similarity-based learning control strategy for heterogeneous linear time-varying (LTV) systems. This strategy leverages the successful experience of the guest systems to enhance  control performance of the host system. We begin from exploring helpful geometric properties of the admissible behaviors of LTV systems. Subsequently, to qualitatively and quantitatively clarify the benefits of the experience of the guest system for the control of host system, we introduce the concepts of similarity and similarity indexes between two admissible behaviors of heterogeneous systems, respectively. Moreover, the criterion for verifying similarity and strategy for calculating the similarity indexes are presented. By leveraging the similarity indexes and the geometric characteristics of the admissible behaviors, we employ the projection techniques to formulate the similarity-based learning control strategy. As a consequence, the host system can benefit from the successful control experience of the guest system, and the effectiveness of similarity-based learning control depends on the similarity indexes between the host and guest systems. The mechanism of the proposed similarity-based learning control strategy is depicted in Fig. \ref{fig-mechanism}.
	
	\begin{figure}[!ht]
		\centering
		{\includegraphics[width=0.9\columnwidth]{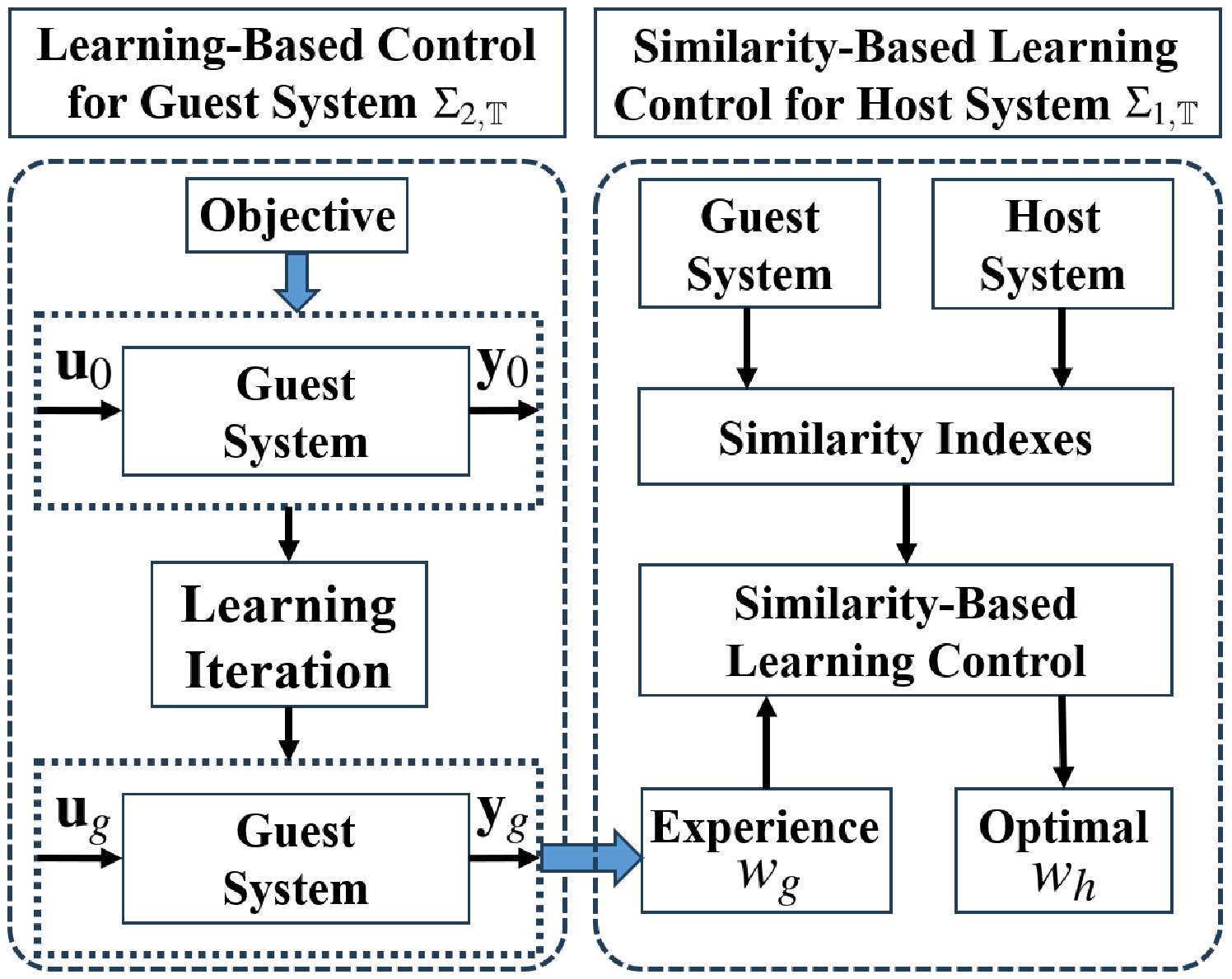}}
		\caption{The mechanism of the similarity-based learning control.}
		\label{fig-mechanism}
	\end{figure}
	
	The reminder of this paper is organized as follows. We present some preliminaries of the admissible behaviors of LTV systems and formulate the similarity-based learning control problems in Section \ref{sec:Preliminaries and Problem Statement}. We introduce the fundamental definitions regarding similarity and similarity indexes, and provide a criterion for the similarity and a strategy for calculating the similarity indexes in Section \ref{sec:similarity and similarity indexes}. By integrating the geometric properties of the admissible behaviors and similarity indexes, we develop a similarity-based learning control strategy via projection techniques in Section \ref{sec:similarity-based learning via projection}. In Section \ref{sec:Simulation Examples}, we provide illustrative examples to demonstrate the effectiveness of the proposed similarity-based learning control strategy. Finally, we summarize our contributions in Section \ref{sec:Conclusions and future works}.

	\emph{Notations:} Let $\mathbb{Z}_N=\{0,1,\cdots,N\}$ and $\mathbb{Z}_+=\{0,1,2,\cdots\}$. Let $\mathbb{R}$ be the set of all real numbers, and $\mathbb{R}^n$ involves all $n$-dimensional vectors whose entries locate in $\mathbb{R}$. For any matrix $A$, its transpose and kernel space are denoted as $A^{\rm T}$ and $\text{ker}\left(A\right)$, respectively. For arbitrary vectors $a,b\in\mathbb{R}^n$, the inner product $\langle a,b\rangle$ refers to $a^{\rm T}b$, and the induced norm is defined as $\left\| a\right\|=\sqrt{\langle a,a\rangle}$. The identity and null matrices with appropriate dimensions are denoted as $I$ and $0$, respectively. Given $s_1,s_2,\cdots,s_n\in\mathbb{R}$, the symbol $\text{diag}(s_1,s_2,\cdots,s_n)$ represents the diagonal matrix whose diagonal entries are $s_1,s_2,\cdots,s_n$.
	\section{Preliminaries and Problem Statement} \label{sec:Preliminaries and Problem Statement}
	Consider two heterogeneous discrete LTV systems with identical time duration $\mathbb{T}$ in the form of
	\begin{equation} \label{eq-model}
		\Sigma_{i,\mathbb{T}}: \left\{
			\begin{aligned}
				x_i(t+1)&=A_i(t)x_i(t)+B_i(t)u_i(t)\\
				y_i(t)&=C_i(t)x_i(t)+D_i(t)u_i(t)
			\end{aligned},\ t\in\mathbb{T},\ i\in\{1,2\}.
			\right.
	\end{equation}
	It is worth mentioning that the results presented in this paper can be equally implemented to the cases where $D_i(t)\equiv0$ for all $t\in\mathbb{T}$, and we introduce the input-output pass-through matrix $D_i(t)$ purely for the sake of generality.
	Without loss of generality, we assume that $\mathbb{T}:=\mathbb{Z}_{T-1}$. Indeed, when the initial time is not zero, we can always shift the time duration to $\mathbb{T}$ by adjusting the model parameters $\left(A_i(t),B_i(t),C_i(t),D_i(t)\right)$ without affecting theoretical results developed in this paper. Throughout the paper, $i=1$ and $i=2$ refer to the host and guest systems, respectively. The input and output are denoted as $u_i(t)\in\mathbb{R}^{n_u}$ and $y_i(t)\in\mathbb{R}^{n_y}$, respectively. To establish a connection between the input and output, the internal state, denoted as $x_i(t)\in\mathbb{R}^{n_x}$, is introduced, and we introduce the following lifted supervectors
	\begin{equation}
		\begin{aligned}
			\nonumber
			\mathbf{u}_i=
			\begin{bmatrix}
				u_i^{\rm T}(0),\ u_i^{\rm T}(1),\ \cdots,\ u_i^{\rm T}(T-1)
			\end{bmatrix}^{\rm T},\\
			\mathbf{y}_i=
			\begin{bmatrix}
				y_i^{\rm T}(0),\ y_i^{\rm T}(1),\ \cdots,\ y_i^{\rm T}(T-1)
			\end{bmatrix}^{\rm T}, \\
			\mathbf{x}_i=
			\begin{bmatrix}
				x_i^{\rm T}(0),\ x_i^{\rm T}(1),\ \cdots,\ x_i^{\rm T}(T-1)
			\end{bmatrix}^{\rm T}.
		\end{aligned}
	\end{equation}
	If there exists a $\mathbf{x}_i$ such that $\left(\mathbf{u}_i,\mathbf{y}_i,\mathbf{x}_i\right)$ satisfies (\ref{eq-model}), then $w_i=\text{col}(\mathbf{u}_i,\mathbf{y}_i)\in\mathbb{R}^{n_wT}$ where $n_w=n_u+n_y$ is called as a \emph{T-length trajectory} of $\Sigma_{i,\mathbb{T}}$. The input-output transfer characteristic is captured by the \emph{behavior} of $\Sigma_{i,\mathbb{T}}$, which is defined by
	\begin{equation}
		\mathcal{B}_i=\{w_i\in\mathbb{R}^{n_wT}|\exists\mathbf{x}_i\text{ such that }(\mathbf{u}_i,\mathbf{y}_i,\mathbf{x}_i)\text{ satisfies } (\ref{eq-model})\}.\nonumber
	\end{equation}
	Nevertheless, in addition to the input-output transfer characteristic, the initial state $x_i(0)$, which characterizes the initially stored energy in the dynamical system $\Sigma_{i,\mathbb{T}}$, also influences the output response \cite{RLozano2013book}. Without loss of generality, we assume that throughout this paper, the initial state of system $\Sigma_{i,\mathbb{T}}$ is $x_i(0)=x_i$. Once we impose restrictions on the initial state, the trajectory and behavior of $\Sigma_{i,\mathbb{T}}$ will be subject to additional constraints. The \emph{$T$-length admissible trajectories}, denoted by $w_{i,x_i}$, refer to those $T$-length trajectories who start from the initial state $x_i$. Correspondingly, the \emph{admissible behavior} can be represented as
	\begin{equation} \nonumber
		\mathcal{B}_{i,x_{i}}=\{w_{i,x_i}\in\mathbb{R}^{n_wT}\left|w_{i,x_i}\in\mathcal{B}_i\text{ and } x_i(0)=x_i\right.\}.
	\end{equation}
	The following lemma is employed to illustrate the geometric property of the admissible behavior.
	\begin{lem} \rm \label{lem-DTaffineset}
		For the discrete LTV system $\Sigma_{i,\mathbb{T}}$ with the initial state $x_i(0)=x_i$, the admissible behavior $\mathcal{B}_{i,x_i}$ is an affine set.
	\end{lem}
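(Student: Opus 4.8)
The plan is to show that $\mathcal{B}_{i,x_i}$ coincides with the solution set of a single consistent linear equation in the supervector $w_{i,x_i}$, from which affineness follows immediately. First I would unroll the state recursion in (\ref{eq-model}): for a fixed initial state $x_i(0)=x_i$ and input supervector $\mathbf{u}_i$, iterating $x_i(t+1)=A_i(t)x_i(t)+B_i(t)u_i(t)$ expresses each $x_i(t)$ as an affine function of $x_i$ and $\mathbf{u}_i$ through products of the transition matrices. Substituting into the output equation yields a lifted relation of the form $\mathbf{y}_i=\Phi_i x_i+\Psi_i\mathbf{u}_i$, where $\Phi_i$ collects the free-response terms and $\Psi_i$ is the (block lower-triangular) input-output map. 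The key structural fact is that, given $x_i$ and $\mathbf{u}_i$, the state trajectory---and hence $\mathbf{y}_i$---is uniquely determined.

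Next, I would rewrite this relation purely in terms of $w_{i,x_i}=\text{col}(\mathbf{u}_i,\mathbf{y}_i)$. The constraint $\mathbf{y}_i-\Psi_i\mathbf{u}_i=\Phi_i x_i$ can be expressed as $M_i w_{i,x_i}=\Phi_i x_i$ with $M_i=\begin{bmatrix}-\Psi_i & I\end{bmatrix}$, so that $\mathcal{B}_{i,x_i}=\{w\in\mathbb{R}^{n_wT}\mid M_i w=\Phi_i x_i\}$. This set is nonempty, since choosing any $\mathbf{u}_i$ (e.g.\ $\mathbf{u}_i=0$) together with its induced output produces an admissible trajectory. Because the solution set of any consistent linear equation $Mw=b$ is an affine set---namely a particular solution plus the kernel $\text{ker}\left(M_i\right)$---this establishes that $\mathcal{B}_{i,x_i}$ is affine.

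As an alternative (and as a sanity check), I could verify the affine-combination property directly: taking two admissible trajectories $w^{(1)},w^{(2)}\in\mathcal{B}_{i,x_i}$ with states $\mathbf{x}^{(1)},\mathbf{x}^{(2)}$, linearity of (\ref{eq-model}) shows that $\lambda w^{(1)}+(1-\lambda)w^{(2)}$ is generated by the state $\lambda\mathbf{x}^{(1)}+(1-\lambda)\mathbf{x}^{(2)}$ for every $\lambda\in\mathbb{R}$, whose initial value is $\lambda x_i+(1-\lambda)x_i=x_i$; hence the combination again lies in $\mathcal{B}_{i,x_i}$.

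I do not expect a genuine obstacle here, since the result is essentially a restatement of linearity. The only point requiring care is the bookkeeping in forming $\Phi_i$ and $\Psi_i$ from the time-varying matrices---getting the ordered products of $A_i(t)$ correct and ensuring $\Psi_i$ is causal (block lower-triangular). Deriving these explicit expressions, rather than merely invoking abstract linearity, is worthwhile because $\Phi_i$ and $\Psi_i$ furnish exactly the subspace description of $\mathcal{B}_{i,x_i}$ on which the subsequent similarity-index and projection constructions will rely.
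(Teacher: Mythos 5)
Your proposal is correct and follows essentially the same route as the paper: both reduce the claim to the lifted relation $\mathbf{y}_i=G_i\mathbf{u}_i+L_ix_i$ (your $\Psi_i,\Phi_i$) and identify $\mathcal{B}_{i,x_i}$ with the solution set of $\left[-G_i,\ I\right]w=L_ix_i$, the paper checking affine combinations directly while you additionally note the equivalent ``particular solution plus kernel'' characterization. Your direct affine-combination sanity check via the state $\lambda\mathbf{x}^{(1)}+(1-\lambda)\mathbf{x}^{(2)}$ is precisely the paper's argument, so no gap remains.
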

	\begin{proof}
		By introducing the input-output transfer matrix $G_i\in\mathbb{R}^{n_yT\times n_uT}$ and the initial state-output transfer matrix $L_i\in\mathbb{R}^{n_yT\times n_x}$, both of which can be readily constructed by leveraging the model knowledge $\left(A_i(t),B_i(t),C_i(t),D_i(t)\right)$, we obtain the relationship between the lifted input and output as \cite{DABristow2006IEEECSM}
		\begin{equation} \label{eq-iorelationship}
			\mathbf{y}_i=G_i\mathbf{u}_i+L_ix_i(0).
		\end{equation}
		Let $w_{i,x_i}$, $v_{i,x_i}\in\mathcal{B}_{i,x_i}$ be arbitrary admissible trajectories of $\Sigma_{i,\mathbb{T}}$, then $w_{i,x_i}$ and $v_{i,x_i}$ must satisfy the following non-homogeneous linear algebraic equations
		\begin{equation} \label{eq-affinecombination}
			\begin{aligned}
				\begin{bmatrix}
					-G_i,\  I
				\end{bmatrix}w_{i,x_i}&=L_ix_i,\\
				\begin{bmatrix}
					-G_i,\  I
				\end{bmatrix}v_{i,x_i}&=L_ix_i
			\end{aligned}
		\end{equation}
		when the initial state is chosen as $x_i(0)=x_i$.
		Given any integer $\alpha\in\mathbb{R}$, the affine combination of $w_{i,x_i}$ and $v_{i,x_i}$ is defined as $\alpha w_{i,x_i}+(1-\alpha)v_{i,x_i}$. From (\ref{eq-affinecombination}), it is directly concluded that
		\begin{equation} \nonumber
			\begin{bmatrix}
				-G_i,\ I
			\end{bmatrix}
			\left(\alpha w_{i,x_i}+(1-\alpha)v_{i,x_i}\right)=L_ix_i,\ \forall \alpha\in\mathbb{R}.
		\end{equation}
		Or equivalently, any affine combination of $w_{i,x_i}$ and $v_{i,x_i}$ is still an admissible trajectory of $\Sigma_{i,\mathbb{T}}$. Therefore, the admissible behavior $\mathcal{B}_{i,x_i}$ constitutes an affine set.
	\end{proof}
	\begin{rem} \rm \label{rem-decomposition}
		By noticing the fact that an affine set can always be decomposed into the sum of a linear subspace and an offset, the admissible behavior $\mathcal{B}_{i,x_i}$ can be further represented as
		\begin{equation} \label{eq-decomposition}
			\begin{aligned}
				\mathcal{B}_{i,x_{i}}&=\mathcal{W}_i+w_{i,\text{off}},\\
				\mathcal{W}_i&=\text{span}\left(H_i\right),\\
				H_i&=\begin{bmatrix}
					\alpha_{i,1},\ \alpha_{i,2},\ \cdots,\ \alpha_{i,n_uT}
				\end{bmatrix}
			\end{aligned}
		\end{equation}
		where $\mathcal{W}_i=\text{ker}\left(\left[-G_i,\ I\right]\right)$ is a subspace in $\mathbb{R}^{n_wT}$ and is of dimension $n_uT$, which is exactly the number of the free input channels \cite{SBoyd2004book}. The vectors $\{\alpha_{i,1},\alpha_{i,2},\cdots,\alpha_{i,n_uT}\}$ are a set of unit orthogonal bases of the subspace $\mathcal{W}_{i}$, that is,
		\begin{equation} \nonumber
			\begin{aligned}
				\langle \alpha_{i,j},\alpha_{i,k}\rangle&=0,\ \forall j\neq k, \ \forall j,k\in\mathbb{Z}_{n_uT}\backslash\{0\}, \ i\in\{1,2\};\\
				\langle \alpha_{i,j},\alpha_{i,j}\rangle&=1,\ \forall j\in\mathbb{Z}_{n_uT}\backslash\{0\},\ i\in\{1,2\}.
			\end{aligned}
		\end{equation}
		Moreover, the offset component $w_{i,\text{off}}$ is essentially a special solution of (\ref{eq-iorelationship}) under $x_i(0)=x_i$, and it can be directly chosen as $w_{i,\text{off}}=\text{col}\left(0_{n_uT},L_ix_i\right)$ for simplicity.
	\end{rem}
	\begin{rem} \rm \label{rem-datadriven}
		Even when the model knowledge of systems $\Sigma_{i,\mathbb{T}}$ is inaccessible, the decomposition mentioned in Remark \ref{rem-decomposition} can be alternatively accomplished via data-driven methods. By employing the persisting excitation condition or its variants, one can reformulate the subspace and offset components in (\ref{eq-decomposition}) via the Hankel matrices constructed from offline sampled data \cite{JCWillems2005SCL}. This also ensures the proposed similarity-based learning control strategy is less model-dependent.
	\end{rem}
	After introducing the preliminaries regarding the admissible behaviors, the similarity-based learning control problems are formulated as follows.
	\begin{problem} \rm
		For the host system $\Sigma_{1,\mathbb{T}}$ with initial state $x_1$ and the guest system $\Sigma_{2,\mathbb{T}}$ with initial state $x_2$, let their admissible behaviors be given as $\mathcal{B}_{1,x_1}$ and $\mathcal{B}_{2,x_2}$, respectively. This paper focuses on solving the following two problems:
		\begin{enumerate}
			\item \label{pro-1}The definitions of the similarity and similarity indexes between two admissible behaviors of heterogeneous systems need to be presented. Moreover, a criterion for clarifying similarity and a strategy for efficiently calculating similarity indexes need to be developed;
			\item \label{pro-2}Suppose that the guest system $\Sigma_{2,\mathbb{T}}$ has accomplished the control tasks (such as tracking, stabilization, regulation, etc.) and obtained the desired admissible trajectory $w_g\in\mathcal{B}_{2,x_2}$ through some learning-based control strategy (such as ILC, RL, NNBC, etc.). When the host system $\Sigma_{1,\mathbb{T}}$ is confronted with the same tasks, we need to present a similarity-based learning control strategy by employing the successful experience of $\Sigma_{2,\mathbb{T}}$. As a result, we will find a solution $w_h\in\mathcal{B}_{1,x_1}$ such that the difference $\left\| w_g-w_h\right\|$ is minimized.
		\end{enumerate}
	\end{problem}
	\section{Similarity and Similarity Indexes} \label{sec:similarity and similarity indexes}
	Regarding the host system $\Sigma_{1,\mathbb{T}}$ with initial state $x_1(0)=x_1$ and the guest system $\Sigma_{2,\mathbb{T}}$ with initial state $x_2(0)=x_2$, the objective of this section is to present the definitions of similarity and similarity indexes between two admissible behaviors $\mathcal{B}_{1,x_1}$ and $\mathcal{B}_{2,x_2}$, which may qualitatively and quantitatively assess the benifits of the guest system's experience to the host system, respectively. Moreover, the similarity criterion and the strategy for efficiently calculating the similarity indexes are developed in this section. The concept of similarity between $\mathcal{B}_{1,x_1}$ and $\mathcal{B}_{2,x_2}$ and its criterion are first introduced as follows.
	\begin{defi} \rm
		The admissible behaviors $\mathcal{B}_{1,x_1}$ and $\mathcal{B}_{2,x_2}$ are said to be similar if $\mathcal{B}_{1,x_1}\cap\mathcal{B}_{2,x_2}\neq\emptyset$.
	\end{defi}
	\begin{lem}\rm \label{lem-criterion}
		The admissible behaviors $\mathcal{B}_{1,x_1}$ and $\mathcal{B}_{2,x_2}$ are similar if and only if there exists a vector $w_{com}$ such that
		\begin{equation} \label{eq-criterion}
			\begin{bmatrix}
				-G_1,\  I\\
				-G_2,\  I
			\end{bmatrix}w_{com}=
			\begin{bmatrix}
				L_1x_1\\
				L_2x_2
			\end{bmatrix}.
		\end{equation}
	\end{lem}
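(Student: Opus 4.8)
The plan is to translate the set-theoretic condition $\mathcal{B}_{1,x_1}\cap\mathcal{B}_{2,x_2}\neq\emptyset$ into a linear-algebraic solvability condition, observing that the stacked linear system in (\ref{eq-criterion}) is precisely the simultaneous membership requirement for a single vector to lie in both behaviors. First I would recall the characterization of the admissible behavior already extracted in the proof of Lemma~\ref{lem-DTaffineset}: using the lifted input-output relation (\ref{eq-iorelationship}), a vector $w=\text{col}(\mathbf{u},\mathbf{y})$ belongs to $\mathcal{B}_{i,x_i}$ if and only if it satisfies the non-homogeneous equation $[-G_i,\ I]w=L_ix_i$. In other words, each admissible behavior is exactly the solution set of one affine linear equation, which is the only fact about the $\mathcal{B}_{i,x_i}$ that the argument needs.

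For the necessity direction, I would suppose $\mathcal{B}_{1,x_1}\cap\mathcal{B}_{2,x_2}\neq\emptyset$ and select any $w_{com}$ in the intersection. Membership in $\mathcal{B}_{1,x_1}$ supplies $[-G_1,\ I]w_{com}=L_1x_1$, while membership in $\mathcal{B}_{2,x_2}$ supplies $[-G_2,\ I]w_{com}=L_2x_2$; vertically concatenating these two equations yields exactly (\ref{eq-criterion}), so a solving vector exists. For the sufficiency direction, I would simply reverse the reasoning: any $w_{com}$ satisfying (\ref{eq-criterion}) satisfies each of its two block rows separately, hence lies in $\mathcal{B}_{1,x_1}$ and in $\mathcal{B}_{2,x_2}$ at the same time, so the intersection is nonempty and the two behaviors are similar by definition.

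Since both directions amount to reading off the two block rows of (\ref{eq-criterion}) as the two individual membership equations, the argument is essentially a restatement rather than a substantive computation. I do not anticipate a genuine obstacle here; the only point requiring care is to make explicit the equivalence between membership in $\mathcal{B}_{i,x_i}$ and satisfaction of the single equation $[-G_i,\ I]w=L_ix_i$, which I would borrow directly from the construction underlying Lemma~\ref{lem-DTaffineset}. Once that characterization is in place, the biconditional follows immediately from the elementary fact that the conjunction of two linear systems is equivalent to their stacked concatenation.
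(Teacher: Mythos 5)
Your proposal is correct and follows exactly the route the paper intends: the paper states Lemma~\ref{lem-criterion} without proof, treating it as immediate from the characterization $[-G_i,\ I]w = L_i x_i$ of membership in $\mathcal{B}_{i,x_i}$ established via (\ref{eq-iorelationship}) and (\ref{eq-affinecombination}) in the proof of Lemma~\ref{lem-DTaffineset}, and your two directions are precisely the stacking and unstacking of those two block equations. Your explicit care about the equivalence between behavior membership and satisfaction of the single affine equation (justified since any input together with the fixed initial state $x_i$ uniquely determines the state and output via the recursion in (\ref{eq-model})) is the only nontrivial point, and you handle it correctly.
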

	\begin{rem}\rm
		Although the similarity is a rather loose concept, it can provide assistance in some special cases for solving Problem \ref{pro-2}). Specifically, when desired admissible trajectory $w_g$ satisfies (\ref{eq-criterion}), the successful experience of the guest system can be directly adopted to guarantee the control performance of the host system without any adjustments. Moreover, similarity serves as the basis for proposing the definition of similarity indexes, since there is no necessity to further investigate the similarity indexes for dissimilar admissible behaviors.
	\end{rem}
	To further quantitatively assess the benefits of the successful experience of the guest system for the host system, we need to introduce a more refined concept, namely similarity indexes. Before that, we first introduce the concepts of principal angles and principal vectors between two subspaces.
	\begin{defi} \rm  \cite{PAAbsil2006lLAA} \label{defi-principalangle}
		For two subspaces $\mathcal{W}_1\subset \mathbb{R}^{n_wT}$ and $\mathcal{W}_2\subset\mathbb{R}^{n_wT}$ with $\text{dim}\left(\mathcal{W}_1\right)=\text{dim}\left(\mathcal{W}_2\right)=n_uT$, the principal angles
		\begin{equation}\nonumber
			\Theta\left(\mathcal{W}_1,\mathcal{W}_2\right)=
			\begin{bmatrix}
				\theta_1,\theta_2,\cdots,\theta_{n_uT}
			\end{bmatrix},\ \theta_k\in\left[0,\frac{\pi}{2}\right],\ k\in\mathbb{Z}_{n_uT}\backslash\{0\}
		\end{equation}
		between $\mathcal{W}_1$ and $\mathcal{W}_2$ are recursively defined by
		\begin{equation}\nonumber
			s_k=\cos\left(\theta_k\right)=\max_{x\in\mathcal{W}_1}\max_{y\in\mathcal{W}_2}\langle x,y\rangle=\langle x_k,y_k\rangle
		\end{equation}
		subject to
		\begin{equation}\nonumber
			\left\| x\right\|=\left\| y\right\|=1,\ \langle x,x_i\rangle=0,\ \langle y,y_i\rangle=0,\ i\in\mathbb{Z}_{k-1}\backslash\{0\}.
		\end{equation}
		Moreover, the vectors $\{x_1,x_2,\cdots,x_{n_uT}\}$ and $\{y_1,y_2,\cdots,y_{n_uT}\}$ are called the principal vectors associated with $\mathcal{W}_1$ and $\mathcal{W}_2$.
	\end{defi}
	\begin{rem} \rm \label{rem-explanation2principalangles}
		The cosine values of the principal angles obtained according to Definition \ref{defi-principalangle} is monotonically non-increasing, that is, $\cos(\theta_k)\geq\cos(\theta_{k+1}), \forall k\in\mathbb{Z}_{n_uT-1}\backslash\{0\}$. This seemingly sophisticated definition may be explained as following steps:
		\begin{enumerate}
			\item When calculating $\cos(\theta_1)$, the unit vectors $x_1\in\mathcal{W}_1$ and $y_1\in\mathcal{W}_2$ can be freely chosen to ensure that the inner product $\langle x_1,y_1\rangle$ is maximized;
			\item When calculating $\cos(\theta_2)$, the orthogonal complements of $x_1$ and $y_1$, denoted as $\mathcal{W}_1-\text{span}(x_1)$ and $\mathcal{W}_2-\text{span}(y_1)$, respectively, need to be first calculated. Afterward, the unit vectors $x_2\in\mathcal{W}_1-\text{span}(x_1)$ and $y_2\in\mathcal{W}_2-\text{span}(y_1)$ can be freely chosen to ensure that $\langle x_2,y_2\rangle$ is maximized;
			\item When calculating $\cos(\theta_k)$, we need to calculate the orthogonal complements $\mathcal{W}_1-\text{span}(x_1,x_2,\cdots,x_{k-1})$ and $\mathcal{W}_2-\text{span}(y_1,y_2,\cdots,y_{k-1})$. Afterward, the unit vectors $x_k\in\mathcal{W}_1-\text{span}(x_1,x_2,\cdots,x_{k-1})$ and $y_k\in\mathcal{W}_2-\text{span}(y_1,y_2,\cdots,y_{k-1})$ can be freely selected to ensure that $\langle x_k,y_k\rangle$ is maximized.
		\end{enumerate}
	\end{rem}
	\begin{rem} \rm\label{rem-principalangle}
		Obviously, $\{x_1,x_2,\cdots,x_{n_uT}\}$ (or $\{y_1,y_2,\cdots,y_{n_uT}\}$) is a special set of the unit orthogonal bases for subspace $\mathcal{W}_1$ (or $\mathcal{W}_2$). The definition of principal angles depends on the specific definition of inner product. Additionally, the principal angles are independent on the offset components $w_{1,\text{off}}$ and $w_{2,\text{off}}$.  We may understand this claim from a geometric perspective. For two similar admissible behaviors $\mathcal{B}_{1,x_1}$ and $\mathcal{B}_{2,x_2}$, since the offset components only cause corresponding translations of the hyperplanes determined by $\mathcal{W}_i$, they do not violate the principal angles between two hyperplanes.
	\end{rem}
	As described in Remark \ref{rem-decomposition}, the admissible behaviors $\mathcal{B}_{i,x_i}$ can be decomposed into the sum of subspace and offset components. Therefore, the principal angles between two subspace components can serve as a powerful tool for quantitatively characterizing the similarity between two admissible behaviors. The definition of similarity indexes is given as follows.
	\begin{defi} \rm \label{defi-similarityindexes}
		For two similar admissible behaviors $\mathcal{B}_{1,x_1}$ and $\mathcal{B}_{2,x_2}$, let them be decomposed as (\ref{eq-decomposition}), the similarity indexes between $\mathcal{B}_{1,x_1}$ and $\mathcal{B}_{2,x_2}$, denoted by $\textbf{SI}\left(\mathcal{B}_{1,x_1},\mathcal{B}_{2,x_2}\right)$, refer to the cosine of the principal angles between $\mathcal{W}_1$ and $\mathcal{W}_2$, i.e.,
		\begin{equation}\nonumber
			\textbf{SI}\left(\mathcal{B}_{1,x_1},\mathcal{B}_{2,x_2}\right):=\cos\Theta\left(\mathcal{W}_1,\mathcal{W}_2\right).
		\end{equation}
	\end{defi}
	\begin{rem} \rm
		The definition of the similarity indexes between two admissible behaviors can be alternatively interpreted via a geometric perspective. Since similar admissible behaviors are essentially intersecting affine hyperplanes in Euclidean space, two affine hyperplanes in the $3$-dimentional Euclidean space $\mathbb{R}^3$ may serve as the illustrative examples, as depicted in Figs. \ref{fig-lesssimilar} and \ref{fig-moresimilar}. The principal angles between the intersecting affine hyperplanes in Figs. \ref{fig-lesssimilar} and \ref{fig-moresimilar} are denoted as $\Phi'$ and $\Phi''$, respectively. It is readily observed that $\cos\left(\Phi''\right)$ is more closer to $\begin{bmatrix}
			1,\ 1
		\end{bmatrix}$ compared to $\cos\left(\Phi'\right)$. By leveraging Definition \ref{defi-similarityindexes}, the affine hyperplanes in Fig. \ref{fig-moresimilar} are more similar.
		\begin{figure}[!ht]
			\centering
			\includegraphics[width=0.6\columnwidth]{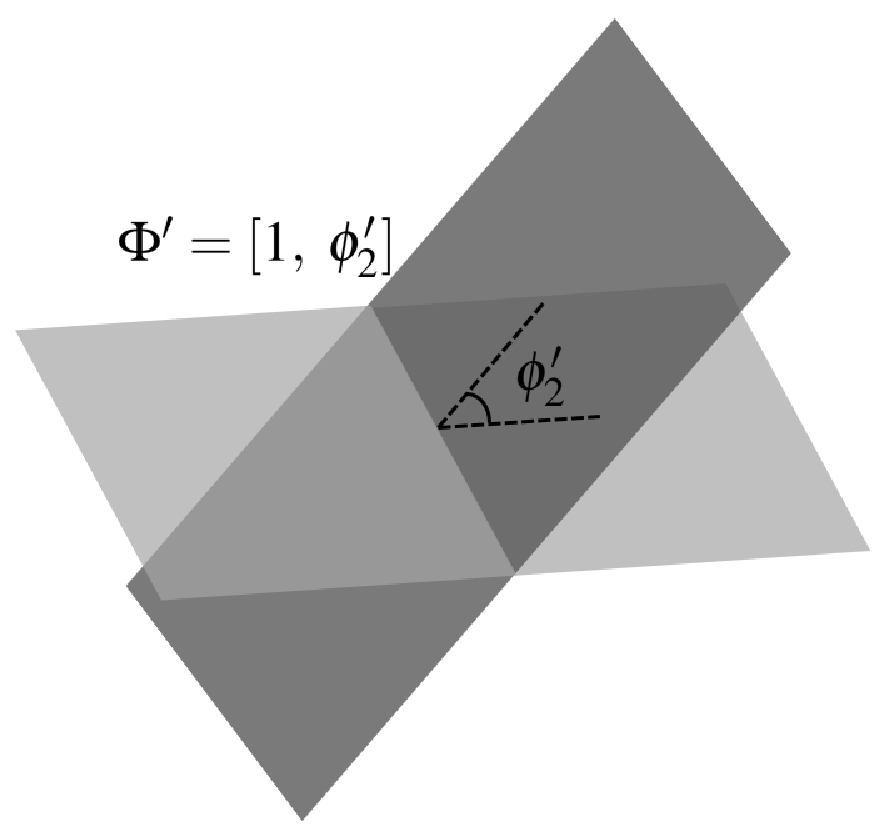}
			\caption{Less similar affine hyperplanes in $\mathbb{R}^3$.}
			\label{fig-lesssimilar}
		\end{figure}
		\begin{figure}[!ht]
			\centering
			\includegraphics[width=0.65\columnwidth]{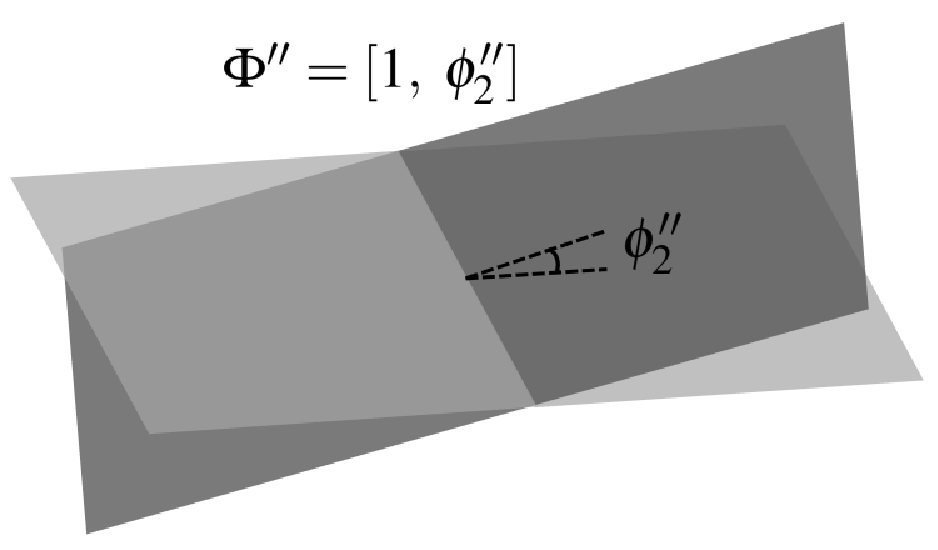}
			\caption{More similar affine hyperplanes in $\mathbb{R}^3$.}
			\label{fig-moresimilar}
		\end{figure}
	\end{rem}
	\begin{rem} \rm
		If some elements of the similarity indexes equal to zero, i.e., $s_k=s_{k+1}=\cdots=s_{n_uT}=0$, there exist two sequences $\{x_k,x_{k+1},\cdots,x_{n_uT}\},\ x_i\in\mathcal{W}_1$ and $\{y_k,y_{k+1},\cdots,y_{n_uT}\},\ y_i\in\mathcal{W}_2$ satisfying the following properties:
		\begin{enumerate}
			\item[P1)] $\langle x_i,x_j\rangle=0,\ \langle y_i,y_j \rangle=0$ for all $i,j\in\mathbb{Z}_{n_uT}\backslash\mathbb{Z}_{k-1}$;
			\item[P2)] $\langle x_i,y_i \rangle=0$ for all $i\in\mathbb{Z}_{n_uT}\backslash\mathbb{Z}_{k-1}$.
			\end{enumerate}
		Moreover, there exist two sequences $\{w_{1,k}-w_{1,\text{off}},w_{1,k+1}-w_{1,\text{off}},\cdots,w_{1,n_uT}-w_{1,\text{off}}\}$ and $\{w_{2,k}-w_{2,\text{off}},w_{2,k+1}-w_{2,\text{off}},\cdots,w_{2,n_uT}-w_{2,\text{off}}\}$ where $w_{1,i}\in\mathcal{B}_{1,x_1}$ and $w_{2,i}\in\mathcal{B}_{2,x_2}$. These two sequences satisfy the properties P1) and P2).
	\end{rem}
	Inheriting the properties of principal angles, similarity indexes are likewise independent on offset components. Nevertheless, Definition \ref{defi-similarityindexes} is not practical for computing similarity indexes between admissible behaviors. Let the SVD of the matrix $H_1^{\rm T}H_2$ be defined as
	\begin{equation} \label{eq-SVD}
		H_1^{\rm T}H_2=UDV^{\rm T}
	\end{equation}
	where
	\begin{equation} \label{eq-singularvalues}
		D=\text{diag}\left(s_1,s_2,\cdots,s_{n_uT}\right),\ s_1\geq s_2\geq \cdots\geq s_{n_uT}\geq0.
	\end{equation}
	The following theorem provides an efficient strategy for calculating the similarity indexes via the SVD.
	\begin{thm} \rm \label{thm-principalangle}
		For similar admissible behaviors $\mathcal{B}_{1,x_1}$ and $\mathcal{B}_{2,x_2}$, let them be decomposed as (\ref{eq-decomposition}). Let the SVD of $H_1^{\rm T}H_2$ be given as (\ref{eq-SVD}) and (\ref{eq-singularvalues}).
		Then the similarity indexes are calculated as
		\begin{equation}
			\textbf{SI}\left(\mathcal{B}_{1,x_1},\mathcal{B}_{2,x_2}\right)=\begin{bmatrix}
				s_1,s_2,\cdots,s_{n_uT}
			\end{bmatrix}.
		\end{equation}
		Moreover, the principal vectors associated with $\mathcal{W}_{1}$ and $\mathcal{W}_{2}$ are given by the columns of $H_1U$ and $H_2V$.
	\end{thm}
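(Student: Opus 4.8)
The plan is to reduce the optimization defining the principal angles in Definition \ref{defi-principalangle} to an optimization over coefficient vectors and then diagonalize it via the SVD. First I would exploit the fact established in Remark \ref{rem-decomposition} that the columns $\{\alpha_{i,1},\dots,\alpha_{i,n_uT}\}$ of $H_i$ form an orthonormal basis of $\mathcal{W}_i$, so that $H_i^{\rm T}H_i=I$. Consequently every unit vector $x\in\mathcal{W}_1$ admits a representation $x=H_1a$ with $\|a\|=1$, and likewise every unit $y\in\mathcal{W}_2$ can be written $y=H_2b$ with $\|b\|=1$, since $\|H_ic\|^2=c^{\rm T}H_i^{\rm T}H_ic=\|c\|^2$. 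Under this substitution the inner product becomes $\langle x,y\rangle=a^{\rm T}\left(H_1^{\rm T}H_2\right)b$, so maximizing $\langle x,y\rangle$ over the unit spheres of $\mathcal{W}_1$ and $\mathcal{W}_2$ is equivalent to maximizing $a^{\rm T}\left(H_1^{\rm T}H_2\right)b$ over unit vectors $a,b\in\mathbb{R}^{n_uT}$.

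Next I would insert the SVD (\ref{eq-SVD}) and pass to the orthogonally transformed variables $p=U^{\rm T}a$ and $q=V^{\rm T}b$. Since $U$ and $V$ are orthogonal, $\|p\|=\|a\|$ and $\|q\|=\|b\|$, and the objective collapses to the diagonal form $a^{\rm T}H_1^{\rm T}H_2b=p^{\rm T}Dq=\sum_{j=1}^{n_uT}s_jp_jq_j$. Because $s_1\geq\cdots\geq s_{n_uT}\geq0$, the Cauchy--Schwarz inequality gives $\sum_j s_jp_jq_j\leq s_1\sum_j|p_j||q_j|\leq s_1\|p\|\|q\|=s_1$, with equality at $p=q=e_1$, the first standard basis vector. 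This yields $\cos(\theta_1)=s_1$ and identifies the first principal vectors as $x_1=H_1Ue_1$ and $y_1=H_2Ve_1$, i.e. the first columns of $H_1U$ and $H_2V$.

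The heart of the argument is the recursive step, which I would carry out by induction on $k$. The key observation is that the orthogonality constraints $\langle x,x_i\rangle=0$ for $i<k$ translate into simple support constraints on $p$: using $x_i=H_1Ue_i$ together with $H_1^{\rm T}H_1=I$, one computes $\langle x,x_i\rangle=a^{\rm T}Ue_i=p_i$, so requiring $x\perp x_1,\dots,x_{k-1}$ forces $p_1=\cdots=p_{k-1}=0$, and symmetrically $q_1=\cdots=q_{k-1}=0$. The maximization at stage $k$ therefore reduces to maximizing $\sum_{j\geq k}s_jp_jq_j$ over unit vectors supported on indices $\geq k$, and the same Cauchy--Schwarz bound now gives the maximum $s_k$, attained at $p=q=e_k$. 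This delivers $\cos(\theta_k)=s_k$ together with the $k$-th principal vectors $x_k=H_1Ue_k$ and $y_k=H_2Ve_k$, completing the induction and showing that the full set of principal vectors is exactly the columns of $H_1U$ and $H_2V$.

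I expect the main obstacle to be the rigorous bookkeeping in the recursive step, specifically verifying that the orthogonality conditions imposed in $\mathcal{W}_1$ and $\mathcal{W}_2$ correspond precisely to annihilating the leading components of $p$ and $q$, and ensuring the argument remains valid when there are repeated singular values. In the latter case the principal vectors are no longer unique, but the columns of $H_1U$ and $H_2V$ still constitute a valid maximizing family, so the stated conclusion continues to hold.
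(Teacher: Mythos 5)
Your proposal is correct and follows essentially the same route as the paper's proof: both identify unit vectors of $\mathcal{W}_i$ with unit coefficient vectors through the orthonormal-column matrices $H_i$, so that $\langle x,y\rangle=a^{\rm T}H_1^{\rm T}H_2b$, and then match the recursive maximization defining the principal angles in Definition \ref{defi-principalangle} with the recursive max-characterization of the singular values of $H_1^{\rm T}H_2$. The only difference is one of self-containedness: the paper cites that variational characterization as a known property of the SVD, whereas you derive it explicitly (passing to $p=U^{\rm T}a$, $q=V^{\rm T}b$, reducing the constraints to support conditions, and applying Cauchy--Schwarz), and you additionally flag the repeated-singular-value case, correctly observing that the columns of $H_1U$ and $H_2V$ remain a valid, though non-unique, maximizing family---a degenerate case the paper's proof passes over silently.
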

	\begin{proof}
		From the definitions of singular values and singular vectors of the matrix $H_1^{\rm T}H_2$, the $k$-th largest singular value of $H_1^{\rm T}H_2$, denoted by $s_k$, can be calculated by
		\begin{equation}\nonumber
			s_k=\max_{\left\| l\right\|=\left\| v\right\|=1}l^{\rm T}H_1^{\rm T}H_2v=l_k^{\rm T}H_1^{\rm T}H_2v_k,\ k\in\mathbb{Z}_{n_uT}\backslash\{0\}		\end{equation}
			subject to
			\begin{equation} \nonumber
				\langle l,l_i\rangle=\langle v,v_i\rangle=0,\ i\in\mathbb{Z}_{k-1}\backslash\{0\}
			\end{equation}
			where $l_i\in\mathbb{R}^{n_uT}$ and $v_i\in\mathbb{R}^{n_uT}$. Since the matrices $H_1$ and $H_2$ constructed in (\ref{eq-decomposition}) are both orthogonal matrices, by introducing the following coordinate transformation
			\begin{equation} \label{eq-coodinate transformation}
				\begin{aligned}
					x_i&=H_1l_i\in\mathcal{W}_1,\ x=H_1l\in\mathcal{W}_1\\
					y_i&=H_2v_i\in\mathcal{W}_2,\ y=H_2v\in\mathcal{W}_2
				\end{aligned},\ i\in\mathbb{Z}_{k-1}\backslash\{0\}
			\end{equation}
			it follows that
			\begin{equation} \nonumber
				\begin{aligned}
					\left\| x\right\|&=\left\| H_1l\right\|=\left\| l\right\|=1,\\
					\left\| y\right\|&=\left\| H_2v\right\|=\left\| v\right\|=1
				\end{aligned}
			\end{equation}
			and
			\begin{equation}\nonumber
				\begin{aligned}
					\langle x,x_i\rangle&=\langle H_1l,H_1l_i\rangle=\langle l,l_i\rangle=0,\\
					\langle y,y_i\rangle&=\langle H_2v,H_2v_i\rangle=\langle v,v_i\rangle=0
				\end{aligned}
				,\ i\in\mathbb{Z}_{k-1}\backslash\{0\}.
			\end{equation}
			Therefore, the singular values of $H_1^{\rm T}H_2$ can be alternatively represented by
			\begin{equation}\nonumber
				s_k=\max_{x\in\mathcal{W}_1}\max_{y\in\mathcal{W}_2}\langle x,y\rangle=\langle x_k,y_k\rangle,\ k\in\mathbb{Z}_{n_uT}\backslash\{0\}
			\end{equation}
			subject to
			\begin{equation} \nonumber
				\left\| x\right\|=\left\| y\right\|=1,\ \langle x,x_i\rangle=0,\ \langle y,y_i\rangle=0,\ i\in\mathbb{Z}_{k-1}\backslash\{0\}.
			\end{equation}
			From Definition \ref{defi-principalangle}, it is obvious that the $k$-th largest singular value of the matrix $H_1^{\rm T}H_2$ is exactly the cosine of $k$-th smallest principal angle between the subspaces $\mathcal{W}_1$ and $\mathcal{W}_2$ spanned by the columns of $H_1$ and $H_2$, that is,
			\begin{equation} \nonumber
				s_k=\cos\left(\theta_k\right),\ k\in\mathbb{Z}_{n_uT}\backslash\{0\}.
			\end{equation}
			Therefore, the similarity indexes can be efficiently obtained by calculating the singular values of the matrix $H_1^{\rm T}H_2$. Moreover, by leveraging Definition \ref{defi-principalangle} and noticing that the vectors $l_i$ and $v_i$ are exactly the $i$-th column of the orthogonal matrices $U$ and $V$, that is,
			\begin{equation} \nonumber
				\begin{aligned}
					U&=\begin{bmatrix}
						l_1,\ l_2,\ \cdots,\ l_{n_uT}
					\end{bmatrix},\\ V&=\begin{bmatrix}
						v_1,\ v_2,\ \cdots,\ v_{n_uT}
					\end{bmatrix}
				\end{aligned}
			\end{equation}
			the principal vectors associated with the subspaces $\mathcal{W}_{1}$ and $\mathcal{W}_{2}$ can be obtained from the coordinate transformation (\ref{eq-coodinate transformation}) as
			\begin{equation} \nonumber
				\begin{aligned}
					\begin{bmatrix}
						x_1,\ x_2,\ \cdots,\ x_{n_uT}
					\end{bmatrix}&=H_1\begin{bmatrix}
						l_1,\ l_2,\ \cdots,\ l_{n_uT}
					\end{bmatrix}=H_1U, \\
					\begin{bmatrix}
						y_1,\ y_2,\ \cdots,\ y_{n_uT}
					\end{bmatrix}&=H_2\begin{bmatrix}
						v_1,\ v_2,\ \cdots,\ v_{n_uT}
					\end{bmatrix}=H_2V.
				\end{aligned}
			\end{equation}
			So far, we have proved that the similarity indexes and principal vectors can be efficiently calculated via the SVD of $H_1^{\rm T}H_2$.
	\end{proof}
	\begin{rem} \rm
		By leveraging the similarity indexes, we can quantitatively evaluate the similarity between admissible behaviors of heterogeneous systems. Particularly, for similar admissible behaviors, if the similarity indexes are $\cos\Theta\left(\mathcal{W}_1,\mathcal{W}_2\right)=1^{\rm T}_{n_uT}$, it indicates that two systems $\Sigma_{1,\mathbb{T}}$ and $\Sigma_{2,\mathbb{T}}$ have identical admissible behaviors. In this case, the successful experience for any control tasks of the guest system can be directly implemented to the host system to address Problem \ref{pro-2}), while ensuring that $\left\| w_g-w_h\right\|=0$. Otherwise, a similarity-based learning control strategy needs to be proposed to minimize the difference between $w_g$ and $w_h$.
	\end{rem}

	\section{Similarity-Based Learning Via Projection}
	\label{sec:similarity-based learning via projection}
	In this section, we focuses on dealing with the similarity-based learning control problems, i.e., Problem \ref{pro-2}), by leveraging the similarity indexes between two admissible behaviors of heterogeneous systems. We suppose that, through some powerful control strategies, the guest system $\Sigma_{2,\mathbb{T}}$ has already achieved the desired admissible trajectory $w_g\in\mathcal{B}_{2,x_2}$, ensuring the accomplishment of its control task. When accomplishing the same task for the host system $\Sigma_{1,\mathbb{T}}$, the successful experience of $\Sigma_{2,\mathbb{T}}$ can provide helpful guidance, and the reliability of this guidance depends on the similarity indexes between the admissible behaviors of two systems. The admissible behaviors are essential high-dimensional affine hyperplanes, and we can once again utilize the affine hyperplanes in the $3$-dimensional Euclidean space as a simple example, as depicted in Fig. \ref{fig-similarlearning}. Although this example slightly loses rigor, it can be employed to illustrate the machenism of the similarity-based learning control. That is, after obtaining the desired admissible trajectory $w_g\in\mathcal{B}_{2,x_2}$, the similarity-based learning control aims at directly deriving $w_h\in\mathcal{B}_{2,x_2}$ minimizing $\left\|w_g-w_h\right\|$ by leveraging $w_g$ and the similarity indexes $\cos\Phi$. Obviously, the projection techniques may serve as helpful tools.
	\begin{figure}[!ht]
		\centering
		{\includegraphics[width=\columnwidth]{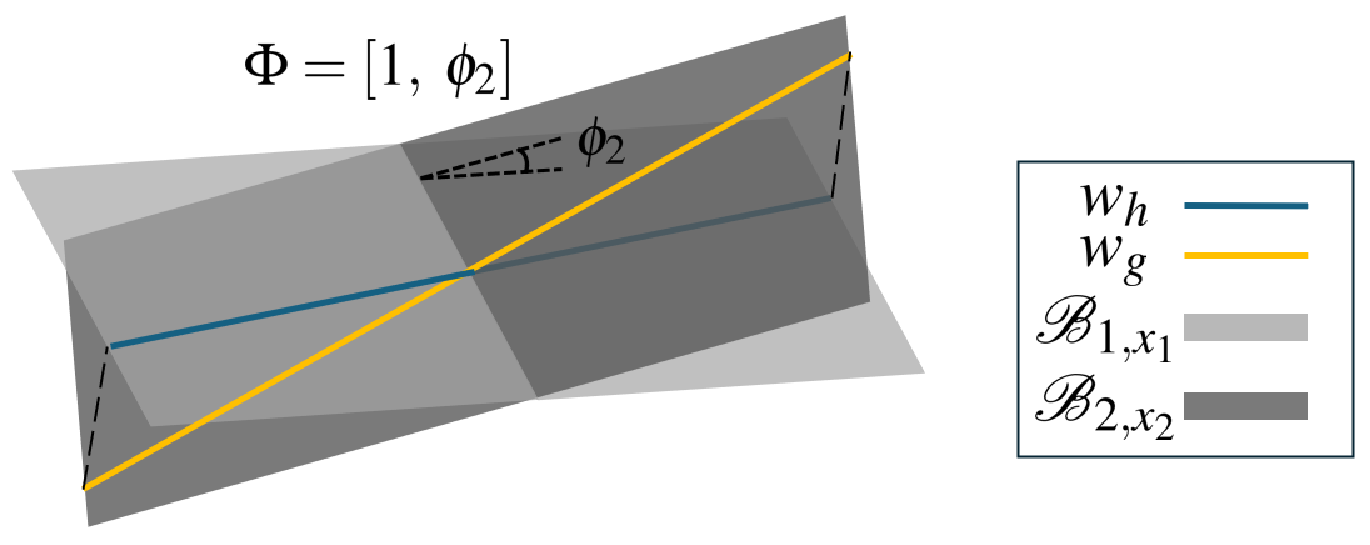}}
		\caption{An example for illustrating the machenism of similarity-based learning control}
		\label{fig-similarlearning}
	\end{figure}
	In the following context, we provide a detailed demonstration on how to integrate the successful experience of $\Sigma_{2,\mathbb{T}}$ into the learning control of $\Sigma_{1,\mathbb{T}}$.
	
	We define the orthogonal projection operator onto $\mathcal{W}_1$ as $P_{\mathcal{W}_1}\left(\cdot\right)$ which is a linear operator. Additionally, the orthogonal projection operator onto $\mathcal{B}_{1,x_1}$ is defined as $P_{\mathcal{B}_{1,x_1}}(\cdot)$, that is, for some $x\in\mathbb{R}^{n_wT}$, $P_{\mathcal{B}_{1,x_1}}(x)\in\mathcal{B}_{1,x_1}$ represents the orthogonal projection of $x$ onto $\mathcal{B}_{1,x_1}$, ensuring $\left\|x-P_{\mathcal{B}_{1,x_1}}(x)\right\|$ is minimized. However, of note is that this projection operator is not a linear one, since $\mathcal{B}_{1,x_1}$ is not a subspace. To address this difficulty, we introduce the following lemma.
	\begin{lem} \rm \cite{JPlesnik2007LAA} \label{lem-affine2subspace}
		Let the admissible behavior $\mathcal{B}_{1,x_1}$ be decomposed as (\ref{eq-decomposition}), where $\mathcal{W}_1$ is a subspace, then
		\begin{equation} \nonumber
			P_{\mathcal{B}_{1,x_1}}(x)=w_{1,\text{off}}+P_{\mathcal{W}_1}(x-w_{1,\text{off}})
		\end{equation}
		holds for all $x\in\mathbb{R}^{n_wT}$.
	\end{lem}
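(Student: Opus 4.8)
The plan is to reduce the projection onto the affine set $\mathcal{B}_{1,x_1}$ to a projection onto the subspace $\mathcal{W}_1$ by means of a translation. Using the decomposition (\ref{eq-decomposition}), every element of $\mathcal{B}_{1,x_1}$ can be written uniquely as $w_{1,\text{off}} + v$ for some $v \in \mathcal{W}_1$. Hence, for an arbitrary $x \in \mathbb{R}^{n_wT}$, computing $P_{\mathcal{B}_{1,x_1}}(x)$ amounts to solving
\[
\min_{v \in \mathcal{W}_1} \left\| x - \left(w_{1,\text{off}} + v\right) \right\|.
\]

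First I would introduce the shifted vector $z = x - w_{1,\text{off}}$, so that the objective above becomes $\left\| z - v \right\|$. Since the norm is invariant under this translation, minimizing $\left\| x - \left(w_{1,\text{off}} + v\right) \right\|$ over $v \in \mathcal{W}_1$ is equivalent to minimizing $\left\| z - v \right\|$ over $v \in \mathcal{W}_1$. By the very definition of the orthogonal projection operator onto the subspace $\mathcal{W}_1$, the unique minimizer of the latter is $v^\star = P_{\mathcal{W}_1}(z) = P_{\mathcal{W}_1}\left(x - w_{1,\text{off}}\right)$. Translating back, the minimizing element of $\mathcal{B}_{1,x_1}$ is $w_{1,\text{off}} + v^\star = w_{1,\text{off}} + P_{\mathcal{W}_1}\left(x - w_{1,\text{off}}\right)$, which is exactly the claimed expression for $P_{\mathcal{B}_{1,x_1}}(x)$.

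The main obstacle, though a mild one, is to justify that the translation leaves the minimizer unchanged and that the subspace projection is well defined. Both points follow because $\mathcal{W}_1$ is a finite-dimensional, hence closed, subspace of $\mathbb{R}^{n_wT}$, so the orthogonal projection onto it exists, is unique, and is characterized by the orthogonality condition $\langle z - v^\star, v \rangle = 0$ for all $v \in \mathcal{W}_1$. To make optimality self-contained, I would combine this orthogonality with the Pythagorean identity: for any $v \in \mathcal{W}_1$, the splitting $z - v = \left(z - v^\star\right) + \left(v^\star - v\right)$ is orthogonal because $z - v^\star \perp \mathcal{W}_1$ while $v^\star - v \in \mathcal{W}_1$, whence $\left\| z - v \right\|^2 = \left\| z - v^\star \right\|^2 + \left\| v^\star - v \right\|^2 \geq \left\| z - v^\star \right\|^2$, with equality only when $v = v^\star$. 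This confirms that $v^\star$ is the unique minimizer and completes the argument.
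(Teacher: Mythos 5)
Your proof is correct and complete. The paper does not actually prove this lemma---it is stated with only a citation to \cite{JPlesnik2007LAA}---so there is no in-paper argument to compare against; your translation reduction ($z = x - w_{1,\text{off}}$, minimize $\left\| z - v\right\|$ over $v\in\mathcal{W}_1$, translate back), together with the orthogonality characterization and the Pythagorean identity establishing existence and uniqueness of the subspace projection, is exactly the standard proof of this fact and fully justifies the cited statement.
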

	Indeed, what Problem 2) looks for, i.e., $w_h$, is the orthogonal projection of $w_g$ onto $\mathcal{B}_{1,x_1}$. Lemma \ref{lem-affine2subspace} calculates this orthogonal projection by investigating another orthogonal projection onto an associated subspace. If $P_{\mathcal{W}_1}$$\left(x-w_{1,\text{off}}\right)$ is calculated by utilizing the model information of $\mathcal{W}_1$, then we fall into the existing framework of learning-based control. The superiority of the similarity-based learning control strategy proposed in this paper lies in the fact that this projection can be efficiently obtained by employing the similarity indexes and principal vectors, without relying on the specific information of $\mathcal{W}_1$ and existing learning-based control strategies. The following theorem elaborates on this claim in detail.
	\begin{thm} \rm \label{thm-similarity-basedlearning}
		For similar admissible behaviors $\mathcal{B}_{1,x_1}$ and $\mathcal{B}_{2,x_2}$, let them be decomposed as (\ref{eq-decomposition}), and let the SVD of $H_1^{\rm T}H_2$ be given as (\ref{eq-SVD}) and (\ref{eq-singularvalues}).
		If the desired admissible trajectory is $w_g\in\mathcal{B}_{2,x_2}$, then the optimal admissible trajectory $w_h\in\mathcal{B}_{1,x_1}$ is calculated as
		\begin{equation} \label{eq-similarity-basedlearning}
			w_h=H_1UD\overline{g}+P_{\mathcal{W}_1}\left(w_{2,\text{off}}-w_{1,\text{off}}\right)+w_{1,\text{off}}
		\end{equation}
		where $\overline{g}$ satisfies
		\begin{equation}  \label{eq-calculateg}
			w_g=H_2V\overline{g}+w_{2,\text{off}}.
		\end{equation}
		In this situation, the difference $\left\| w_h-w_g\right\|$ is minimized, or equivalently, Problem \ref{pro-2}) is addressed.
	\end{thm}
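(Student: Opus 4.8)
The plan is to recognize that the minimizer $w_h$ is nothing but the orthogonal projection of $w_g$ onto the affine set $\mathcal{B}_{1,x_1}$. Indeed, since $\mathcal{B}_{1,x_1}$ is a closed affine (hence convex) subset of $\mathbb{R}^{n_wT}$, the projection theorem guarantees a unique $w_h\in\mathcal{B}_{1,x_1}$ minimizing $\|w_g-w_h\|$, and it is exactly $w_h=P_{\mathcal{B}_{1,x_1}}(w_g)$. This reduces Problem \ref{pro-2}) to evaluating this projection explicitly. Invoking Lemma \ref{lem-affine2subspace} with $x=w_g$ then converts the affine projection into a subspace projection,
\begin{equation} \nonumber
	w_h=w_{1,\text{off}}+P_{\mathcal{W}_1}\left(w_g-w_{1,\text{off}}\right).
\end{equation}

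Next I would substitute the representation (\ref{eq-calculateg}) of $w_g$ into this expression. Writing $w_g-w_{1,\text{off}}=H_2V\overline{g}+\left(w_{2,\text{off}}-w_{1,\text{off}}\right)$ and exploiting the linearity of $P_{\mathcal{W}_1}$ noted in the text, the projection splits as
\begin{equation} \nonumber
	P_{\mathcal{W}_1}\left(w_g-w_{1,\text{off}}\right)=P_{\mathcal{W}_1}\left(H_2V\overline{g}\right)+P_{\mathcal{W}_1}\left(w_{2,\text{off}}-w_{1,\text{off}}\right).
\end{equation}
The second term already matches the corresponding summand in (\ref{eq-similarity-basedlearning}), so the crux is to show that the first term equals $H_1UD\overline{g}$.

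To evaluate $P_{\mathcal{W}_1}\left(H_2V\overline{g}\right)$, I would use that the columns of $H_1$ form an orthonormal basis of $\mathcal{W}_1$ (Remark \ref{rem-decomposition}), whence the orthogonal projector onto $\mathcal{W}_1$ admits the closed form $P_{\mathcal{W}_1}(\cdot)=H_1H_1^{\rm T}(\cdot)$. Therefore $P_{\mathcal{W}_1}\left(H_2V\overline{g}\right)=H_1H_1^{\rm T}H_2V\overline{g}$, and inserting the SVD (\ref{eq-SVD}) together with the orthogonality $V^{\rm T}V=I$ gives $H_1^{\rm T}H_2V=UDV^{\rm T}V=UD$. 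Hence $P_{\mathcal{W}_1}\left(H_2V\overline{g}\right)=H_1UD\overline{g}$, and assembling the three pieces yields precisely (\ref{eq-similarity-basedlearning}).

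I expect the only real subtlety to be the justification that the norm-minimizing $w_h$ coincides with the orthogonal projection $P_{\mathcal{B}_{1,x_1}}(w_g)$; once this is granted, the remainder is a mechanical substitution relying on the linearity of $P_{\mathcal{W}_1}$, the orthonormality of the columns of $H_1$, and the SVD identity $H_1^{\rm T}H_2V=UD$. A secondary point worth checking is the existence and well-posedness of $\overline{g}$ in (\ref{eq-calculateg}), which follows because $w_g\in\mathcal{B}_{2,x_2}=\mathcal{W}_2+w_{2,\text{off}}$ and the columns of $H_2V$ span $\mathcal{W}_2$.
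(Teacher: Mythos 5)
Your proposal is correct and takes essentially the same route as the paper's proof: identify $w_h$ as the orthogonal projection of $w_g$ onto the affine set $\mathcal{B}_{1,x_1}$, reduce it to a subspace projection via Lemma \ref{lem-affine2subspace}, split by linearity of $P_{\mathcal{W}_1}$, and establish $P_{\mathcal{W}_1}\left(H_2V\right)=H_1UD$. The only difference is cosmetic — you obtain this last identity compactly from the projector formula $P_{\mathcal{W}_1}=H_1H_1^{\rm T}$ and $H_1^{\rm T}H_2V=UD$, whereas the paper performs the equivalent computation column-by-column in the orthonormal basis $H_1U$ using the inner-product relations between principal vectors — and your added remarks on the existence and uniqueness of the minimizer and the well-posedness of $\overline{g}$ only strengthen the argument.
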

	\begin{proof}
		Thanks to the fact that the admissible behaviors $\mathcal{B}_{1,x_1}$ and $\mathcal{B}_{2,x_2}$ are similar, by leveraging the decomposition in (\ref{eq-decomposition}) and the results presented in Theorem \ref{thm-principalangle}, the similarity indexes between $\mathcal{B}_{1,x_1}$ and $\mathcal{B}_{2,x_2}$ can be calculated through the SVD of $H_1^{\rm T}H_2$ as
		\begin{equation} \nonumber
			\textbf{SI}\left(\mathcal{B}_{1,x_1},\mathcal{B}_{2,x_2}\right)=\begin{bmatrix}
				s_1,s_2,\cdots,s_{n_uT}
			\end{bmatrix}.
		\end{equation}
		Moreover, the columns of $H_1U$ and $H_2V$ serve as the principal vectors associated with the subspaces $\mathcal{W}_1$ and $\mathcal{W}_2$. As we have emphasized earlier, the to-be-sought optimal admissible trajectory $w_h\in\mathcal{B}_{1,x_1}$ is exactly the orthogonal projection of $w_g$ onto $\mathcal{B}_{1,x_1}$, i.e., $P_{\mathcal{B}_{1,x_1}}\left(w_g\right)$. From the decomposition of $\mathcal{B}_{i,x_i}$ constructed in (\ref{eq-decomposition}), by noticing the fact $w_g\in\mathcal{B}_{2,x_2}$, there must exists some vector $\overline{g}$ satisfying
		\begin{equation}\nonumber
			w_g=H_2V\overline{g}+w_{2,\text{off}}.
		\end{equation}
		From (\ref{eq-calculateg}), the to-be-sought $w_h$ can be expressed as
		\begin{equation} \label{eq-whcalculation1}
			\begin{aligned}
				w_h&=P_{\mathcal{B}_{1,x_1}}\left(w_g\right) \\
				&=P_{\mathcal{B}_{1,x_1}}\left(H_2V\overline{g}+w_{2,\text{off}}\right).
			\end{aligned}
		\end{equation}
		Since the operator $P_{\mathcal{B}_{1,x_1}}\left(\cdot\right)$ is not a linear operator, directly calculating the orthogonal projection in (\ref{eq-whcalculation1}) is challenging. By leveraging Lemma \ref{lem-affine2subspace}, this orthogonal projection can be derived through calculating corresponding orthogonal projection onto the subspace component as
		\begin{equation} \label{eq-proof2thm2}
			\begin{aligned}
				w_h
				&=w_{1,\text{off}}+P_{\mathcal{W}_1}\left(H_2V\overline{g}+w_{2,\text{off}}-w_{1,\text{off}}\right).
			\end{aligned}
		\end{equation}
		Since the operator $P_{\mathcal{W}_1}\left(\cdot\right)$ is a linear one, from (\ref{eq-proof2thm2}), the to-be-sought $w_h$ can be further expressed as
		\begin{equation} \nonumber
			w_h=P_{\mathcal{W}_1}\left(H_2V\right)\overline{g}+P_{\mathcal{W}_1}\left(w_{2,\text{off}}-w_{1,\text{off}}\right)+w_{1,\text{off}}.
		\end{equation}
		Afterward, we elaborate that $P_{\mathcal{W}_1}(H_2V)$ can be efficiently derived by implementing the similarity indexes between $\mathcal{B}_{1,x_1}$ and $\mathcal{B}_{2,x_2}$. Let the $i$-th column of $H_1U$ (or $H_2V$) be denoted as $\left(H_1U\right)_i$ (or $\left(H_2V\right)_i$), then $P_{\mathcal{W}_1}\left(H_2V\right)$ can be expressed as
		\begin{equation} \nonumber
			P_{\mathcal{W}_1}\left(H_2V\right)=
			\begin{bmatrix}
				P_{\mathcal{W}_1}\left(H_2V\right)_1,\ P_{\mathcal{W}_1}\left(H_2V\right)_2,\ \cdots,\ P_{\mathcal{W}_1}\left(H_2V\right)_{n_uT}
			\end{bmatrix}
		\end{equation}
		where
		\begin{equation} \nonumber
			P_{\mathcal{W}_1}\left(H_2V\right)_i=\sum_{j=1}^{n_uT}\langle (H_2V)_i,(H_1U)_j\rangle(H_1U)_j,\ \forall i\in\mathbb{Z}_{n_uT}\backslash\{0\}.
		\end{equation}
		From the definitions of similarity indexes and principal vectors and the properties of SVD, the following two properties hold:
		\begin{equation} \nonumber
			\begin{aligned}
				s_k&=\langle (H_2V)_k,(H_1U)_k\rangle,\ \forall k\in\mathbb{Z}_{n_uT}\backslash\{0\},\\
				0&=\langle (H_2V)_i,(H_1U)_j\rangle,\ \forall i\neq j,\ \forall i,j\in\mathbb{Z}_{n_uT}\backslash\{0\}.
			\end{aligned}
		\end{equation}
		Building upon the above properties, the orthogonal projection $P_{\mathcal{W}_1}\left(H_2V\right)$ can be rewritten as
		\begin{equation}\nonumber
			\begin{aligned}
				&P_{\mathcal{W}_1}(H_2V)\\
				=&\begin{bmatrix}
					P_{\mathcal{W}_1}\left(H_2V\right)_1,\ P_{\mathcal{W}_1}\left(H_2V\right)_2,\ \cdots,\ P_{\mathcal{W}_1}\left(H_2V\right)_{n_uT}
				\end{bmatrix}\\
				=&\begin{bmatrix}
					(H_1U)_1s_1,\ (H_1U)_2s_2,\ \cdots,\ (H_1U)_{n_uT}s_{n_uT}
				\end{bmatrix}\\
				=&H_1UD
			\end{aligned}
		\end{equation}
		and the optimal admissible trajectory $w_h\in\mathcal{B}_{1,x_1}$ can be further represented as
		\begin{equation} \nonumber
			w_h=H_1UD\overline{g}+P_{\mathcal{W}_1}\left(w_{2,\text{off}}-w_{1,\text{off}}\right)+w_{1,\text{off}}.
		\end{equation}
		Since $w_h$ is exactly the orthogonal projection of $w_g$ onto $\mathcal{B}_{1,x_1}$, the difference $\left\| w_h-w_g\right\|$ is naturally minimized, and the proof is completed.
	\end{proof}
	\begin{rem} \rm
		The significance of Theorem \ref{thm-similarity-basedlearning} lies in accomplishing the learning control tasks for the host system by utilizing the similarity indexes and successful experience from the guest system. In (\ref{eq-similarity-basedlearning}), the successful experience can be characterized by the vector $\overline{g}$, and what we need to do is to calculate $\overline{g}$ from $w_g$ via (\ref{eq-calculateg}). This strategy possesses excellent advantages when dealing with multiple tasks, i.e., different desired admissible trajectory $w_g$, since the terms $P_{\mathcal{W}_1}\left(w_{2,\text{off}}-w_{1,\text{off}}\right)$ and $w_{1,\text{off}}$ are task-independent. Specifically, all we need to do is to calculate the corresponding $\overline{g}$ for different $w_g$. This allows us to directly calculate the optimal admissible trajectory $w_h$ for the host system without identification of model knowledge and repeatedly resorting to existing learning control methods.
	\end{rem}
	\begin{rem} \rm
		Additionally, as emphasized in Remark \ref{rem-datadriven}, the similarity-based learning control strategy proposed in this paper can immediately reconstructed through data-driven methods. Indeed, the advantages of the similarity-based learning control strategy are even more pronounced within a data-driven framework. The specific system model knowledge $(A_i{t},B_i(t),C_i(t),D_i(t))$ is no longer required to be identified from data. Instead, only the subspace and offset components, $\mathcal{W}_i$ and $w_{i,\text{off}}$, and the similarity indexes need to be identified, and the learning control for the host system can be directly completed. Compared to the existing data-driven learning control strategies, similarity-based learning control significantly improves the efficiency of learning control.
	\end{rem}

	\section{Simulation Examples} \label{sec:Simulation Examples}
	In this section, numerical examples are provided to illustrate the effectiveness of the similarity-based learning control strategy proposed in this paper. We present two numerical examples. Example \ref{exam-multipletasks} focuses on demonstrating the effectiveness of similarity-based learning control strategy when dealing with multiple tasks, while Example \ref{exam-differentsimilarityindexes} aims at illustrating the impact of similarity indexes on the effectiveness of similarity-based learning control.
	\subsection{Example 1: Effectiveness in Multiple Tasks}
	\label{exam-multipletasks}
	In this subsection, we provide an example to demonstrate the effectiveness of the similarity-based learning control in multiple tasks. The model information regarding the host and guest systems, represented by $\Sigma_{1,\mathbb{T}}$ and $\Sigma_{2,\mathbb{T}}$, is given in (\ref{eq-simulation1}).
	\begin{equation}\label{eq-simulation1}
		\begin{aligned}
			A_1(t)&=\begin{bmatrix}
				0.05t  & 1      & 0\\
				0      & 0.05t  & 1\\
				-0.500 & -1.850 & -2.500+0.05t
			\end{bmatrix}, \\
			A_2(t)&=\begin{bmatrix}
				0.05t  & 1      & 0\\
				0      & 0.05t  & 1\\
				-0.512 & -1.920 & -2.400+0.05t
			\end{bmatrix},\\
			B_1(t)&=B_2(t)=\begin{bmatrix}
				6\\
				0\\
				0.500
			\end{bmatrix},\ C_1(t)=C_2(t)=\begin{bmatrix}
				2\\
				\sqrt{2}\\
				0
			\end{bmatrix}^{\rm T},\\
			D_1(t)&=D_2(t)=0,\ x_1(0)=\begin{bmatrix}
				0\\
				0\\
				1.020
			\end{bmatrix},\ x_2(0)=\begin{bmatrix}
				0\\
				0\\
				1
			\end{bmatrix}.
		\end{aligned}
	\end{equation}
	This situation typically arises when there is model uncertainty between the host system and the guest system. We assume that both the host and guest systems are confronted with common output tracking tasks, where they need to track two different reference output, as shown in (\ref{eq-reference}), over a given time duration $\mathbb{T}=\mathbb{Z}_{24}$.
	\begin{equation}\label{eq-reference}
		\begin{aligned}
			r^1(t)&=\sin\left(\frac{\pi}{4}t\right),\ \forall t\in\mathbb{Z}_{24},\\
			r^2(t)&=\left\{\begin{aligned}
				1,\ t \text{ mod } 8\in\{1,2,3,4\}\\
				0,\ t \text{ mod } 8\in\{0,5,6,7\}
			\end{aligned}\right.,\ \forall t\in\mathbb{Z}_{24}.
		\end{aligned}
	\end{equation}
	To address the output tracking issues of the guest system, ILC may serve as a powerful tool. By leveraging the ILC algorithm proposed in \cite{DHOwens2005ARC}, output tracking tasks of the guest system are accomplished after 500 algorithm iterations. When dealing with the tracking issues of the host system, there is no need to reapply the ILC algorithm. Instead, we can directly utilize the learning experience of the guest system to facilitate the learning control of the host system via the similarity-based learning control strategy proposed in this paper. For the desired reference $r^1(t)$, the learned inputs and outputs of the guest system and host systems are depicted in Fig. \ref{fig-ref1}. In Fig. \ref{fig-ref1}, $u_g^1(t)$-ILC and $y_g^1(t)$-ILC represent the learned input and output for the guest system via ILC. By leveraging the successful experience of the guest system, $u_h^1(t)$-SBL and $y_h^1(t)$-SBL represent the learned input and output for the host system via similarity-based learning control. Likewise, with the application of the similarity-based learning control strategy, the learned inputs and outputs of guest system and host system for the reference $r^2(t)$ are shown in Fig. \ref{fig-ref2}.
	\begin{figure}[!ht]
		\centering
		{\includegraphics[width=\columnwidth]{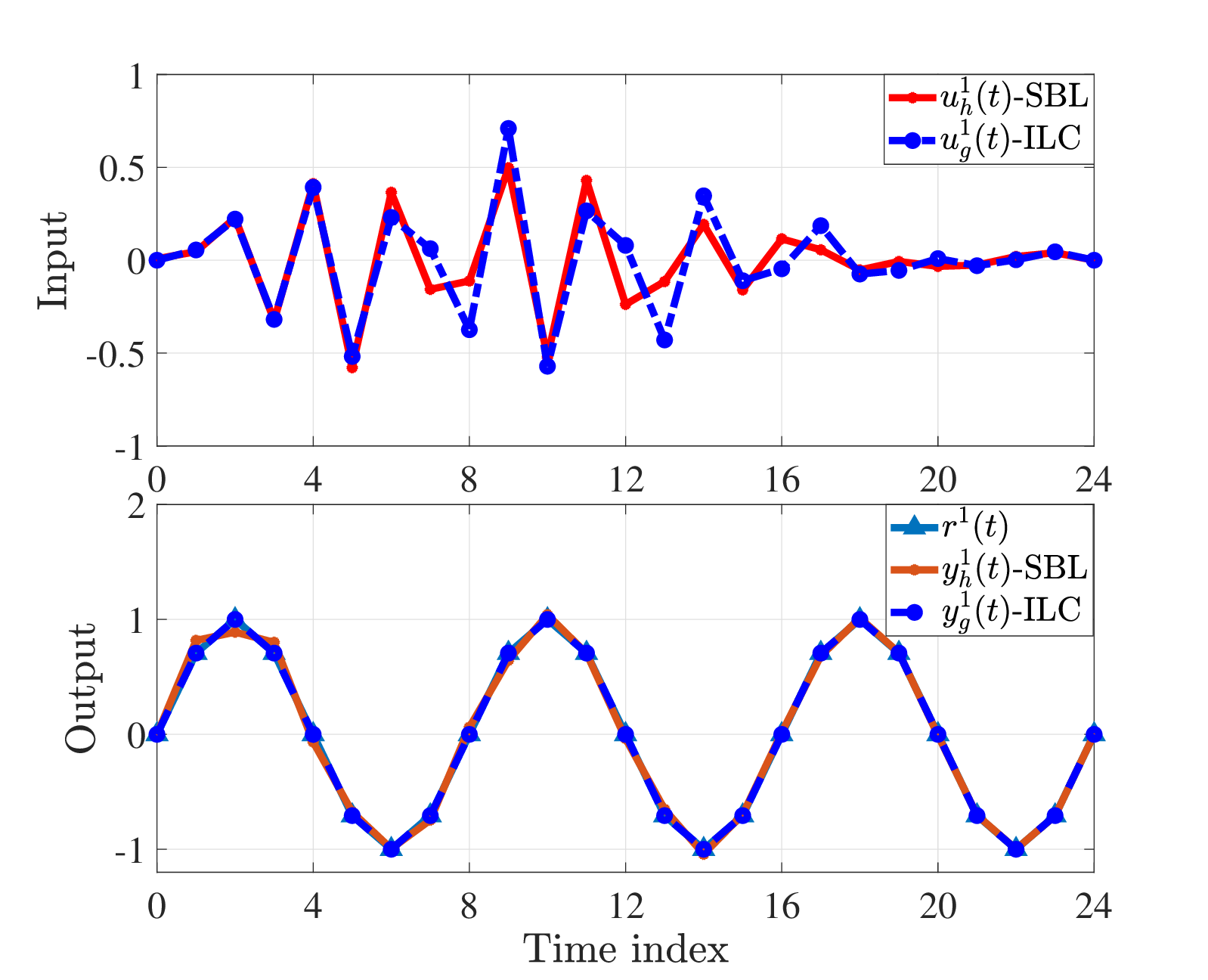}}
		\caption{Inputs and outputs of $\Sigma_{1,\mathbb{T}}$ and $\Sigma_{2,\mathbb{T}}$ for reference $r^1(t)$.}
		\label{fig-ref1}
	\end{figure}
	\begin{figure}[!ht]
		\centering
		{\includegraphics[width=\columnwidth]{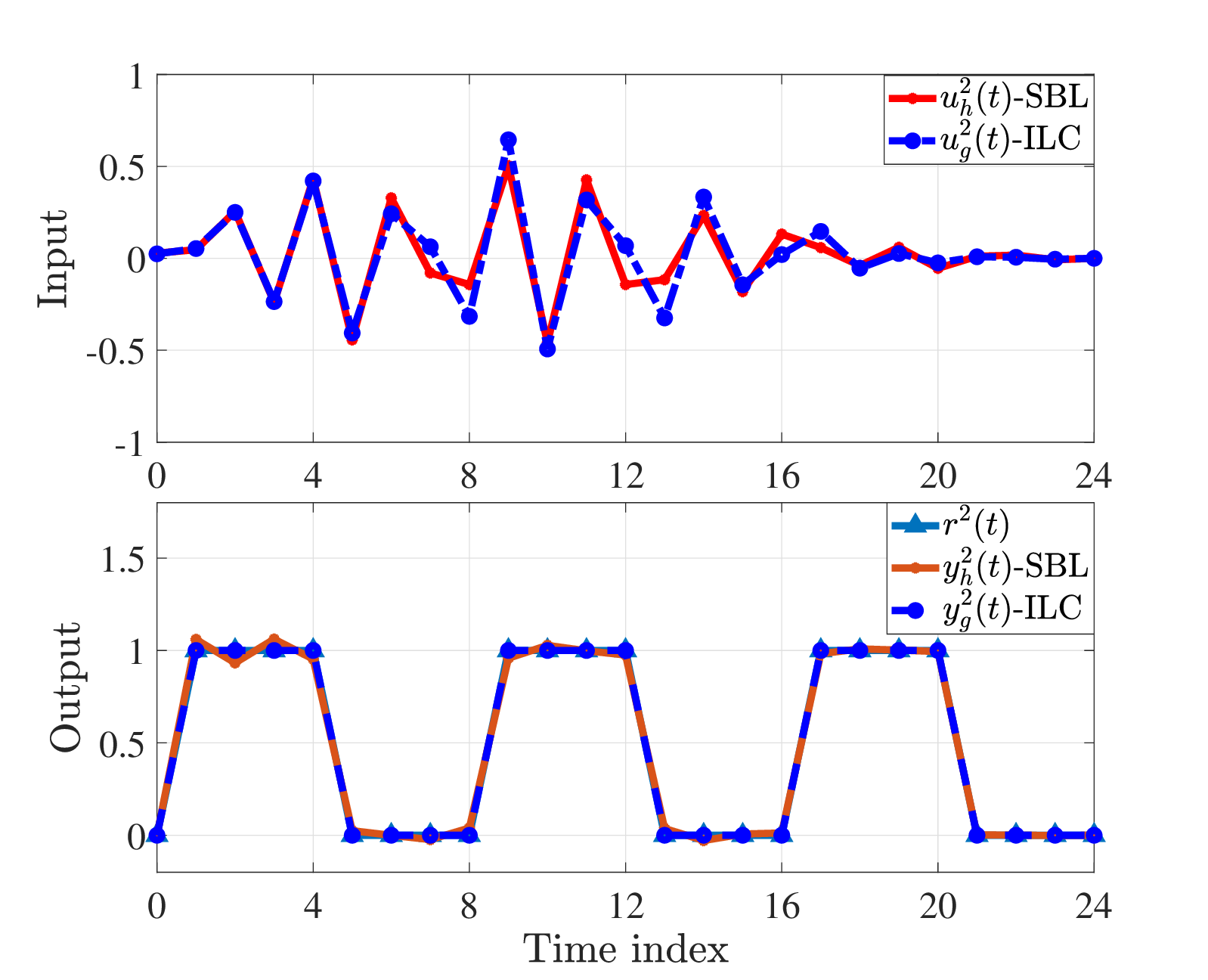}}
		\caption{Inputs and outputs of $\Sigma_{1,\mathbb{T}}$ and $\Sigma_{2,\mathbb{T}}$ for reference $r^2(t)$.}
		\label{fig-ref2}
	\end{figure}
	
	Through this numerical example, it is observed that with the guest system leveraging ILC to accomplish the output tracking tasks, the host system can employ the similarity-based learning control without the need for any other learning control algorithm to directly achieve the same output tracking tasks. Therefore, the effectiveness of similarity-based learning control in multiple tasks has been validated.
	\subsection{Example 2: Effectiveness with Different Similarity Indexes}
	\label{exam-differentsimilarityindexes}
	Building upon Example \ref{exam-multipletasks}, this numerical example is employed to demonstrate the fact that the successful experience of guest systems with high similarity, which is characterized by similarity indexes, is more benificial. We consider the output tracking task for the desired reference $r^1(t)$ in Example \ref{exam-multipletasks}, with the difference being the introduction of an additional guest system $\Sigma_{3,\mathbb{T}}$, whose model information is given as
	\begin{equation}
		\begin{aligned}
			A_3(t)&=\begin{bmatrix}
				0.05t  & 1      & 0\\
				0      & 0.05t  & 1\\
				-0.600 & -2     & -2.300+0.05t
			\end{bmatrix}\\
			B_3(t)&=\begin{bmatrix}
				6 \\
				0 \\
				0.500
			\end{bmatrix},
			C_3(t)=\begin{bmatrix}
				2\\
				\sqrt{2}\\
				0
			\end{bmatrix}^{\rm T},D_3(t)=0.
		\end{aligned}
	\end{equation}
	and the initial state is $x_3(0)=\begin{bmatrix}
		0,\ 0,\ 1.1
	\end{bmatrix}^{\rm T}$.
	With the application of Theorem \ref{thm-principalangle}, it is concluded that $\mathcal{B}_{2,x_2}$ is more similar to $\mathcal{B}_{1,x_1}$ than $\mathcal{B}_{3,x_3}$. By leveraging ILC algorithm, the guest system $\Sigma_{3,\mathbb{T}}$ can achieve perfect tracking performance for the reference $r_1(t)$. With the successful experience of $\Sigma_{3,\mathbb{T}}$, the tracking task of $\Sigma_{1,\mathbb{T}}$ can be accomplished with the similarity-based learning control. In Fig. \ref{fig-differentsystem}, $u_g^3(t)$-ILC and $y_g^3(t)$-ILC represent the learned input and output for the guest system $\Sigma_{3,\mathbb{T}}$ via ILC, and $u^3_h(t)$-SBL and $y^3_h(t)$-SBL denote the learned input and output for the host system $\Sigma_{1,\mathbb{T}}$.
		\begin{figure}[!ht]
		\centering
		{\includegraphics[width=\columnwidth]{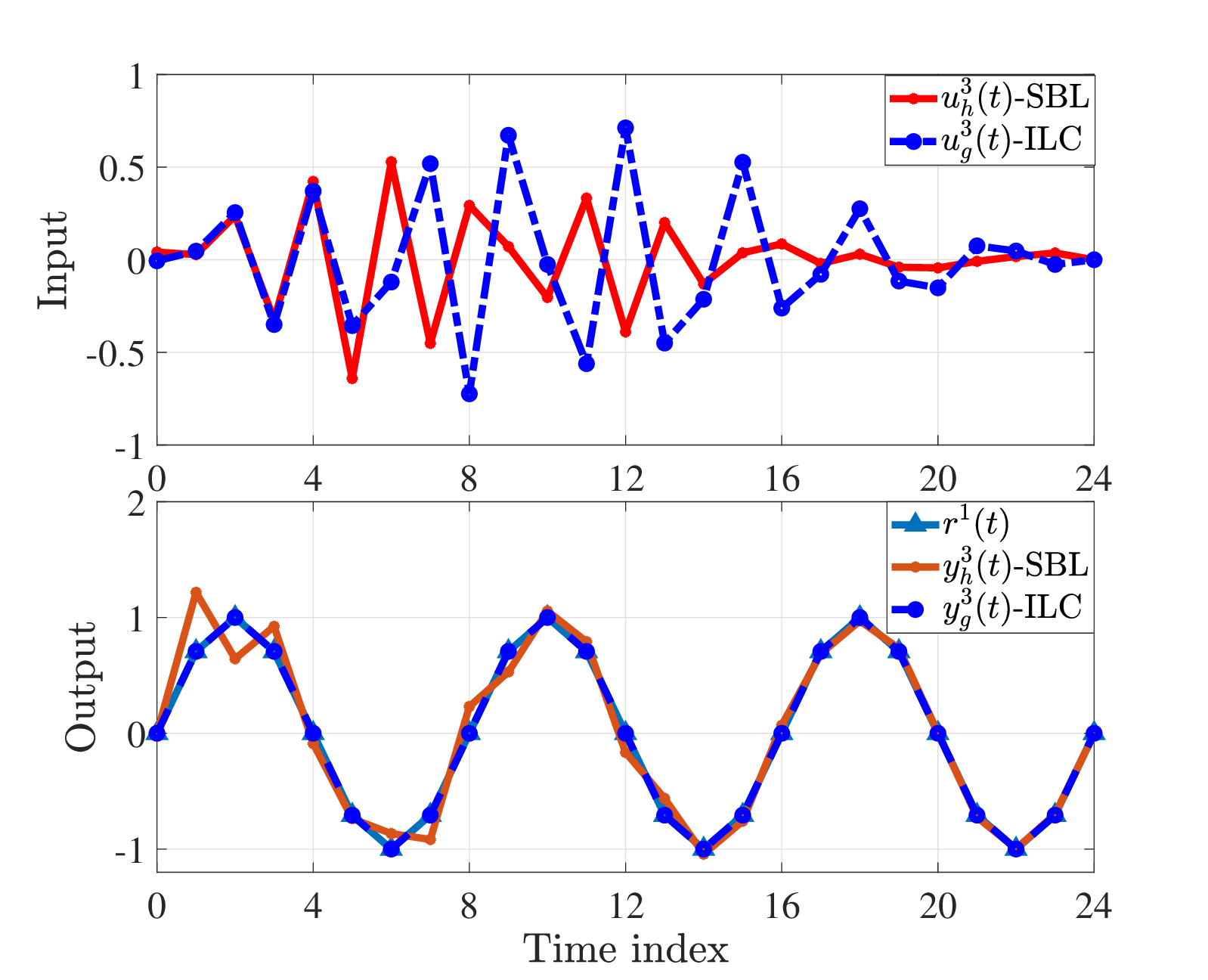}}
		\caption{Inputs and outputs of $\Sigma_{1,\mathbb{T}}$ and $\Sigma_{3,\mathbb{T}}$ for reference $r^1(t)$.}
		\label{fig-differentsystem}
	\end{figure}
	By comparing the control performances shown in Figs. \ref{fig-ref1} and \ref{fig-differentsystem}, it is concluded that the successful experience of a more similar guest system is more beneficial for the learning control of the host system.
	\section{Conclusions and Future Works} \label{sec:Conclusions and future works}
	In this paper, the definitions of similarity and similarity-indexes for heterogeneous linear systems have been proposed, based on which an innovative similarity-based learning control strategy has been developed. By defining the similarity indexes between admissible behaviors as the principal angles between their subspace components, we have quantitatively assessed the value of the successful experience of the guest systems for the host system. Furthermore, by exploiting the projection techniques, a similarity-based learning control strategy has been proposed. Based on this, the host system has accomplished the control tasks by directly leveraging the successful experience of the guest system, without repeatedly resorting to any sophisticate learning-based control frameworks.

	There are still many issues worth exploring in the domain of similarity-based learning control. One of these issues is to reformulate the proposed similarity-based learning control strategy in a data-driven framework, in order to better highlight its advantages over other learning-based control frameworks. Additionally, the proposed similarity-based learning control is an primary and open-loop control strategy. Specifically, the difference between $w_h$ and $w_g$ depends on the similarity indexes, thus rendering it unadjustable. Therefore, in the future, it is necessary to further develop a closed-loop similarity-based learning control strategy to allow for the precise control of the difference $\left\| w_g-w_h\right\|$.
	\section*{Acknowledgment}
	The authors would like to thank Prof. Kevin L. Moore from the Department of Electrical Engineering at the Colorado School of Mines for his insightful remarks and discussions on the results presented in this paper.

	\bibliographystyle{IEEEtran}
	\bibliography{IEEEabrv,myref}

\end{document}